\newtheorem{theorem}{Theorem}
\newtheorem{lemma}[theorem]{Lemma}
\newtheorem{proposition}[theorem]{Proposition}
\theoremstyle{definition}
\newtheorem{algorithm}[theorem]{Algorithm}
\newtheorem{example}[theorem]{Example}
\title{Open shop scheduling games\thanks{A. Atay acknowledges support from the Hungarian National Research, Development and Innovation Office via the grant PD-128348, and the Hungarian Academy of Sciences via the Cooperation of Excellences Grant (KEP-6/2019). P. Calleja acknowledges the support from research grant ECO2017-86481-P (Agencia Estatal de Investigaci\'{o}n (AEI) y Fondo Europeo de Desarrollo Regional (FEDER)) and  2017SGR778 (Generalitat de Catalunya). This work has been partly supported by COST Action CA16228 European Network for Game Theory.}}
\author{Ata Atay\thanks{Corresponding author. Institute of Economics, Hungarian Academy of Sciences. E-mail: \href{mailto:ata.atay@mta.hu}{ata.atay@krtk.mta.hu}} \and Pedro Calleja\thanks{Department of Economic, Financial, and Actuarial Mathematics, and BEAT, University of Barcelona, Spain.E-mail: \href{mailto:calleja@ub.edu}{calleja@ub.edu}} \and Sergio Soteras\thanks{Open University of Catalonia. E-mail: \href{mailto:ssoterass@uoc.com}{ssoterass@uoc.com}}}
\date{\today}
\begin{document}

\maketitle

\begin{abstract}
This paper takes a game theoretical approach to open shop scheduling problems with unit execution times to minimize the sum of completion times. By supposing an initial schedule and associating each job (consisting in a number of operations) to a different player, we can construct a cooperative TU-game associated with any open shop scheduling problem. We assign to each coalition the maximal cost savings it can obtain through admissible rearrangements of jobs' operations. By providing a core allocation, we show that the associated games are balanced. Finally, we relax the definition of admissible rearrangements for a coalition to study to what extend balancedness still holds.  \vspace{0.2cm}\\
\noindent \textbf{Keywords:} Open shop $\cdot$ scheduling $\cdot$ cooperative game theory $\cdot$ core $\cdot$ balancedness \vspace{0.2cm}\\
\noindent \textbf{Mathematics Subject Classification (2010):} 90B35 $\cdot$ 91A12\vspace{0.2cm}\\
\noindent \textbf{JEL Classification:} C44 $\cdot$ C71
\end{abstract}

\section{Introduction}
In a scheduling problem a set of jobs have to be executed by a number of machines. Such a general formulation arises in many real life situations, like manufacturing processes, computer science, logistics, etc. In this paper we consider \textit{open shop scheduling problems} introduced by \cite{gs76} in which $n$ jobs consisting of $m$ operations have to be processed on $m$ machine, each operation on a different machine. We do not allow preemptions, the order in which jobs' operations are processed is immaterial but two operations of the same job cannot be processed simultaneously (for a survey see Chapter 8 in \citealp{p12} or Chapter 6 in \citealp{l04}).

By assuming that every job belongs to a player, that incurs some waiting cost until she can leave the system and that there is an initial processing schedule (let say first come, first served) we can take a game theoretical approach. The main question is how to distribute the cost savings the players can obtain by cooperation, whenever they rearrange their jobs to be processed in an optimal way, minimizing total waiting costs. 

\cite{cetal89} are the first to study one-machine situations with weighted completion time as the cost-criterion from such cooperative point of view. In order to obtain stable allocations of the total cost savings, where no coalition receives less than the cost savings they can generate by themselves, we need to determine first what rearrangements of their jobs' operations are allowed for the coalition. An accepted and broadly used definition of admissible rearrangement for a coalition (\citealp{cetal89}) imposes that the set of predecessors of a player not in the coalition on a machine should be the same as initially. In our model, we also consider the weighted completion time as the cost-criterion, but contrary to the most of the literature in the field and due to machines may incur idle time, such condition does not prevent players not in the coalition of being hurt. This forces us to impose additional conditions on what should be admissible for a coalition.

\cite{cetal02} provide an extensive review of many scheduling problems that has been treated from this point of view. In particular, many different multiple machine problems has been studied as parallel machines (\citealp{hetal99}; \citealp{caetal02}) or flow shop problems (\citealp{vdnetal92}; \citealp{eetal08}).

The computational complexity of finding optimal schedules for open shop problems to minimize the weighted sum of completion times has been well established in the literature from \cite{ac82}. However, \cite{aa84} provide two clear algorithms to obtain optimal schedules for unit open shop scheduling problems, where all processing times of all operations are equal and all players have the same linear cost function. In the main result of the paper we provide a stable allocation of the total cost savings obtained by cooperation for unit open shop scheduling problems. Finally, we study to what extend such allocation is still stable if we relax the definition of admissible rearrangements for a coalition. We follow the same approach of \cite{cetal93} and \cite{s06} where operations of jobs in a coalition are allowed to jump over operations of jobs not belonging to the coalition  whenever this does not hurt the interest (completion time) of jobs outside the coalition. We obtain some positive results depending on the specifications of such relaxations. On the other hand, stable allocations may not exist for weaker relaxation conditions. \cite{vh03} and \cite{metal15} consider different relaxation approaches.

The rest of the paper is organized as follows. In Section \ref{sec:open shop problems} we introduce unit time open shop problems and present some optimal schedules for the weighted completion time criterion. Section \ref{sec:model} introduces the coalitional game associated with an open shop scheduling problem with initial schedule and discusses which rearrangements should be admissible for a coalition. In Section \ref{sec:core}, we show our main result, that is, the core of a unit open shop scheduling game is non-empty. Finally, in Section \ref{sec:relax}, we study to what extend balancedness still holds when we relax the definition of admissible rearrangements for a coalition.
 
\section{Open shop scheduling problems}
\label{sec:open shop problems}
An \textit{open shop scheduling problem} consists of $n$ jobs, $N=\{1,\ldots , n\}$, each of them consisting of $m$ operations each one to be processed on a different machine, being $M=\{1,2,\ldots,m\}$ the set of machines. When no confusion arises we denote by $|N|=n$ the cardinality of the set of jobs and by $|M|=m$ the cardinality of the set of machines. Alternatively, we can think of a set of players, any of them needing to finish a job that consists of $m$ operations, each of these operations to be processed on a different machine. Players and jobs are used interchangeably throughout this paper. 

The operation of job $i\in N$ on machine $j\in M$ is denoted by the pair $(i,j)$ and $p^{j}_{i}>0$ denotes the processing time of $(i,j)$. We assume that all operations have to be processed uninterrupted, that is, preemptions are not allowed. Moreover, in an open shop scheduling problem the process order of a job's operations is immaterial, but two operations of the same job cannot be processed simultaneously. Also, a machine cannot process more than one job at a time. 

A \emph{schedule} is a mapping $s: N\times M\rightarrow \mathbb{R}_{+}$ that assigns to every operation a starting time. The set of all feasible schedules, according to the open shop specifications, is denoted by $\mathcal{S}$. Let $s\in\mathcal{S}$, we denote the starting time of operation $(i,j)$ according to the schedule $s$ by $t^{j}_{i}(s):=s(i,j)$. Since no preemption is allowed, $C^{j}_{i}(s)=t^{j}_{i}(s)+p^{j}_{i}$ is the completion time of the operation $(i,j)$ according to $s$. We denote the completion time of job $i$ according to $s$ by $C_{i}(s)=\max\limits_{j\in M} C^{j}_{i}(s)$. 

A \emph{scheme} $\sigma=(\sigma^{j})_{j\in M}$ is a collection of $m$ bijections, $\sigma^{j}:N\rightarrow\{1,\ldots , n\}$, one for each machine $j\in M$, where $\sigma^{j}(i)=k$ interprets the operation of job (player) $i$ on machine $j$ is at position $k$ according to scheme $\sigma$. In other words, player $i$ has the right to process her operation on machine $j$ before her $n-k$ followers according to $\sigma^{j}$. The set of all possible schemes is denoted by $\Sigma$. A feasible schedule $s\in\mathcal{S}$ is \textit{compatible} with the scheme $\sigma\in\Sigma$ if and only if for all $j\in M$ and $i,i'\in N$ it holds $$t^{j}_{i}(s)<t^{j}_{i'}(s)\iff \sigma^{j}(i)<\sigma^{j}(i').$$ 

In the next example, we illustrate that a given scheme $\sigma\in\Sigma$ admits a number of different compatible admissible schedules. On the other hand, a given schedule $s\in\mathcal{S}$ is clearly compatible with a unique scheme.

\begin{example}
\label{ex:easy}
Consider the open shop scheduling problem with $N=\{1,2\}$, $M=\{1,2\}$, $p^{j}_{i}=1$ for all $i\in N$ and for all $j\in M$, and consider the scheme $\sigma^{1}=\sigma^{2}=(1,2)$. Then, $\sigma$ admits, among others, the following two feasible schedules $s_{1}$ and $s_{2}$:
\begin{center}
\begin{tabular}{ c|c|c|c|c } 
\cline{2-4} 
$m_{1}$  & 1 & 2 &   & $s_1$,\\  \cline{2-4} 
$m_{2}$ &   & 1  & 2  &\\  \cline{2-4}
\end{tabular}
\end{center}
\begin{center}
\begin{tabular}{ c|c|c|c|c } 
\cline{2-4} 
$m_{1}$  & & 1 & 2  & $s_2$.\\  \cline{2-4} 
$m_{2}$ &  1 & 2 &   &\\  \cline{2-4}
\end{tabular}
\end{center}
In the first schedule machine 2 incurs idle time, while in the second schedule machine 1 incurs idle time.
\end{example}

A \textit{semi-active schedule} is such that there does not exist an operation which could be started earlier without altering the processing scheme or violating the restrictions on the processing of operations, according to the open shop specifications. So, all machines start processing all operations as soon as it is possible without violating the fact that two operations of the same job cannot be processed at the same time.  In Example~\ref{ex:easy}, only the feasible schedules $s_{1}$ and $s_{2}$ are semi-active. Observe that there is no one-to-one correspondence between schemes and semi-active schedules for open shop problems.

Every job (player) $i\in N$ has a waiting cost that is linear with respect to the moment it can leave the system, i.e. the cost function of a job $i\in N$ for a given $s\in\mathcal{S}$ is of type $c_{i}(s)=\alpha_{i}C_{i}(s)$ where $\alpha_{i}>0$ is the weight or waiting cost per unit time of player $i$. Our first aim is to find an optimal schedule $\hat{s}_{N}\in\mathcal{S}$ that minimizes the weighted sum of completion times. Note that since the waiting costs are non-decreasing with respect to the completion time for all $i\in N$, then we only need to look at semi-active schedules.

Finding optimal schedules for such open shop situations is a difficult computational problem and has been proved to be NP-hard even if there are only two machines (see for instance \citealp{ac82}). Thus, henceforth we restrict our attention to unit  time open shop problems. In a unit time open shop problem $p^{j}_{i}=p$ for all $i\in N$ and all $j\in M$, and $\alpha_{i}=\alpha$ for all $i\in N$. Without loss of generality we assume that $p^{j}_{i}=1$ for all $i\in N$ and all $j\in M$, and $\alpha_{i}=1$ for all $i\in N$. A unit open shop scheduling problem is a pair $(N,M)$.

\cite{aa84} provide two different optimal schedules for unit time open shop problems minimizing the weighted sum of completion times. For our purpose, here we introduce one of them:

\begin{algorithm}[\citealp{aa84}]
\label{Algo:AA}
Schedule operations of player $i\in N$ continuously, starting at the earliest possible time (respecting operations processing restrictions) on machine $i\mod (m)$\footnote{For all $x,y\in \mathbb{R}, \lfloor x \rfloor :=\max\{k\in\mathbb{Z}\mid k\leq x\}$,  $\lceil x \rceil :=\min\{ k\in\mathbb{Z} \mid x\leq k\}$, and  $x \text{ mod } (y):= x-y\lfloor \frac{x}{y}\rfloor$.} until machine $m$. Then, move to the earliest possible time (respecting operations processing restrictions) on machine 1 and schedule continuously the remaining operations of player $i$.
\end{algorithm} 

Next, we provide a unit time open shop scheduling problem with six players and four machines to illustrate Algorithm \ref{Algo:AA}.

\begin{example}
\label{ex:algo}
Consider $(N,M)$ with $N=\{1,2,3,4,5,6\}$ and $M=\{1,2,3,4\}$. Then, an optimal schedule $\hat{s}_{N}$ according to Algorithm \ref{Algo:AA} is: 

\begin{center}
\begin{tabular}{ c|c|c|c|c|c|c|c|c| } 
 \cline{2-9} 
$m_{1}$  & 1 & 4 & 3 & 2 & 5 & & & 6 \\ \cline{2-9} 
$m_{2}$  & 2 & 1 & 4 & 3 & 6 & 5 & & \\ \cline{2-9} 
$m_{3}$  & 3 & 2 & 1 & 4 & & 6 & 5 & \\ \cline{2-9} 
$m_{4}$  & 4 & 3 & 2 & 1 & & & 6 & 5 \\ 

\cline{2-9} 
\end{tabular},
\end{center}
which is only compatible with the associated scheme $\sigma$:
\begin{center}
\begin{tabular}{ c|c|c|c|c|c|c| } 
\cline{2-7} 
$\sigma^{1}$ & 1 & 4 & 3 & 2 & 5 & 6 \\  \cline{2-7} 
$\sigma^{2}$ & 2 & 1 & 4 & 3 & 6 & 5 \\  \cline{2-7} 
$\sigma^{3}$ & 3 & 2 & 1 & 4 & 6 & 5 \\  \cline{2-7} 
$\sigma^{4}$ & 4 & 3 & 2 & 1 & 6 & 5 \\  \cline{2-7} 
\end{tabular}.
\end{center}
As noted in \cite{aa84}  if $n=mk+l$ with $k=\lfloor \frac{n}{m}\rfloor$ and $l\geq 0$, this algorithm constructs $k$ compact blocks where machines do not stop between operations. In block $1\leq r\leq k$, $m$ jobs start processing at time $(r-1)m$ and finishes at time $rm$. In the last block $k+1$, the last $l$ jobs start at $km$ and finishes at $(k+1)m$, but in this block the machines incur some idle interval. 

Note that there is a machine $(m_{2})$ that processes all operations continuously, and $C_{i}(\hat{s}_{N})=\left\lceil \frac{\sigma^{2}(i)}{m}\right\rceil m$ for all $i\in N$, indeed. Note also that in fact there are many optimal schedules which can be obtained by just switching the names of the players. 
\end{example}

\section{Unit time open shop scheduling games}
\label{sec:model}
Under the assumption that there is an initial feasible schedule $s_{0}\in\mathcal{S}$ that describes the initial processing of the operations on all machines, a \emph{unit time open shop scheduling problem} with initial schedule $s_0$ is a triplet $(N,M,s_{0})$.

A \emph{cooperative transferable utility (TU) game} is defined by a pair $(N,v)$ where $N$ is the (finite) player set and the characteristic function $v$ assigns a real number $v(T)$ to each coalition $T\subseteq N$, with $v(\emptyset)=0$. 

For any coalition $\emptyset\neq T\subseteq N$ and any feasible schedule $s\in\mathcal{S}$, by $c_{T}(s)=\sum\limits_{i\in T}c_{i}(s)$ we denote the waiting cost of the coalition $T$ according to $s$. Then, given a unit open shop scheduling problem with initial schedule $(N,M,s_{0})$, we define the unit time open shop scheduling game $(N,v)$ where the characteristic function assigns to every coalition the maximal cost savings it can obtain by means of admissible rearrangements (or admissible schedules). That is, if $\mathcal{AS}(T)\subseteq\mathcal{S}$ denotes the set of admissible schedules for coalition $T\subseteq N$,

$$v(T)=c_{T}(s_{0})-c_{T}(\hat{s}_{T}),$$
where $\hat{s}_{T}\in\mathcal{AS}(T)$ is such that $c_{T}(\hat{s}_{T})=\min\limits_{s\in\mathcal{AS}(T)}c_{T}(s)$.

Clearly, $\mathcal{AS}(N)$ should coincide with $\mathcal{S}$ under any definition of admissible rearrangement. \cite{cetal93} impose two principles that should be considered when defining which rearrangements are admissible for a coalition:
\begin{enumerate}[(i)]
\item The rearrangement should not hurt the interests of the players outside the coalition.
\item The rearrangement should be possible without an active cooperation of players outside the coalition.
\end{enumerate} 

Following most of the literature on one or multiple parallel machines (see for instance \citealp{cetal02}) we say that a schedule $s$, that is compatible with the unique scheme $\sigma$, will be admissible for a coalition $\emptyset\neq T\subset N$ if for each machine no player outside the coalition $T$ has a different set of predecessors as initially. That is, if by $\sigma_{0}$ we denote the unique scheme compatible with $s_{0}$, for all $i\in N\setminus T$ and all $j\in M$ it holds
\begin{equation}
\label{relax1}
\{k\in N : \sigma^{j}(k)<\sigma^{j}(i)\}=\{k\in N : \sigma^{j}_{0}(k)<\sigma^{j}_{0}(i)\}.
\end{equation}

Hence, for a given $j\in M$,  switches are only allowed among players from connected coalitions. A coalition $T\subseteq N$ is called connected with respect to $\sigma_{0}^{j}$ if for all $i,i'\in T$ and $k$ such that $\sigma_{0}^{j}(i)<\sigma_{0}^{j}(k)<\sigma_{0}^{j}(i')$ it holds that $k\in T$. We denote by $T/\sigma^{j}_{0}$ the set of maximally connected components of $T$ according to $\sigma^{j}_{0}$.

We denote the set of admissible schedules for coalition $T$ that satisfies (\ref{relax1}) by $\mathcal{AS}^{1}(T)$. Unfortunately, as shown in Example \ref{ex:hurt}, given $T\subseteq N$, $\mathcal{AS}^{1}(T)$ might include admissible rearrangements that hurts players outside the coalition $T$.

\begin{example}
\label{ex:hurt}
Consider $(N,M,s_{0})$ with $N=\{1,2,3,4,5\}$, $M=\{1,2\}$, 
and the initial schedule $s_{0}$ as follows:
\begin{center}
\begin{tabular}{ c|c|c|c|c|c|c| } 
\cline{2-7} 
$m_{1}$ & 1 & 2 &  & 3 & 4 & 5 \\  \cline{2-7} 
$m_{2}$ & 5 & 1 & 3 & 4 & 2 &  \\  \cline{2-7}
\end{tabular}.
\end{center}
Let $T=\{3,5\}$. It is easy to check that $\hat{s}_{T}\in\mathcal{AS}^{1}(T)$ is: 
\begin{center}
\begin{tabular}{ c|c|c|c|c|c|c| } 
\cline{2-7} 
$m_{1}$ & 1 & 2 & 3 & 4 & 5 & \\  \cline{2-7} 
$m_{2}$ & 5 & 1 & & 3 & 4 & 2   \\  \cline{2-7}
\end{tabular}.
\end{center}
Then, $C_{3}(s_{0})=4$, $C_{5}(s_{0})=6$ while $C_{3}(\hat{s}_{T})=4$, $C_{5}(\hat{s}_{T})=5$, and $v(\{3,5\})=1$. However, one can easily see that $\hat{s}_{T}$ hurts the interests of player 2 who does not take part of the coalition $T$, since $C_{2}(s_{0})=5<6=C_{2}(\hat{s}_{T})$.
\end{example}

The following example shows that, moreover, given $T\subseteq N$, $\mathcal{AS}^{1}(T)$ might include an admissible rearrangement that requires the active cooperation of players outside the coalition $T$. 

\begin{example}
\label{ex:act_coop}
Consider $(N,M,s_{0})$ with $N=\{1,2,3\}$, $M=\{1,2\}$, and the initial schedule $s_{0}$ as follows:
\begin{center}
\begin{tabular}{ c|c|c|c|c|c| } 
\cline{2-6} 
$m_{1}$ & 1 & 2 & 3 & & \\  \cline{2-6} 
$m_{2}$ & & 1 & & 3 & 2 \\  \cline{2-6}
\end{tabular}.
\end{center}
Let $T=\{2\}$. It is easy to check that $\hat{s}_{\{2\}}\in\mathcal{AS}^{1}(T)$ is:
\begin{center}
\begin{tabular}{ c|c|c|c|c|c| } 
\cline{2-5} 
$m_{1}$ & 1 & 2 & & 3 \\  \cline{2-5} 
$m_{2}$ & & 1 & 3 & 2 \\  \cline{2-5}
\end{tabular}.
\end{center}
Then, $C_{2}(s_{0})=5$, $C_{2}(\hat{s}_{\{2\}})=4$, and hence $v(\{2\})=1$. Even though $\hat{s}_{\{2\}}$ delays the operation $(3,1)$, $C_{3}(s_{0})=C_{3}(\hat{s}_{\{2\}})$, and hence player 3 is not hurt. However, changing from schedule $s_{0}$ to $\hat{s}_{\{2\}}$ requires the active cooperation of player 3. Observe that contrary to $s_{0}$, in $\hat{s}_{\{2\}}$ player 3 decides to process first $(3,2)$ instead of $(3,1)$.
\end{example}

In view of Examples \ref{ex:hurt} and \ref{ex:act_coop}, it is clear that due to the fact that coalitions can make use of idle times on machines, condition (\ref{relax1}) is not enough to guarantee the two principles required for the definition of admissible rearrangements of a coalition.

In \cite{cetal93} a number of different approaches to admissible arrangements are studied. They combine two different ideas. In the first one, players in a coalition are allowed to jump over players outside the coalition. We address this approach in Section \ref{sec:relax}. In the second one, they simply focus on the starting time (completion time) of operations of players outside the coalition. If those times do not increase, they will not be worse off. We present three proposals, inspired by those in \cite{cetal93}. The first one is based on the starting time of operations. That is, a schedule $s\in\mathcal{S}$, with corresponding compatible scheme $\sigma$, is admissible for coalition $\emptyset\neq T\subset N$ if it satisfies (\ref{relax1}) and the starting time of operations of players outside $T$ remains unchanged:
\begin{equation}
\label{relax2}
\text{for all}\,\, i\in N\setminus T \text{ and all }\, j\in M, \, \text{it holds that}, t^{j}_{i}(s)=t^{j}_{i}(s_{0}).
\end{equation}
We denote the set of admissible schedules for coalition $T$ that satisfies (\ref{relax1}) and (\ref{relax2}) by $\mathcal{AS}^{2}(T)$.

A second approach considers that a schedule $s\in\mathcal{S}$ with corresponding compatible scheme $\sigma$ is admissible for a coalition $\emptyset\neq T\subset N$ if it satisfies (\ref{relax1}) and the starting time of operations of players outside $T$ does not increase:
\begin{equation}
\label{relax3}
\text{for all}\,\, i\in N\setminus T \text{ and all }\, j\in M, \, \text{it holds that}, t^{j}_{i}(s)\leq t^{j}_{i}(s_{0}).
\end{equation}
We denote the set of admissible schedules for coalition $T$ that satisfies (\ref{relax1}) and (\ref{relax3}) by $\mathcal{AS}^{3}(T)$.

In \cite{cetal93} only one machine problems are studied, hence condition (\ref{relax3}) is equivalent to enforcing non-increasing completion times for all $i\in N\setminus T$. Following that spirit, we introduce a new definition of admissible rearrangements. A schedule $s\in\mathcal{S}$ with corresponding compatible scheme $\sigma$ is admissible for a coalition $\emptyset\neq T\subset N$ if it satisfies (\ref{relax1}) and the completion time of players outside $T$ does not increase:
\begin{equation}
\label{relax4}
\text{for all}\,\, i\in N\setminus T, \, \, \text{it holds that}, C_{i}(s)\leq C_{i}(s_{0}).
\end{equation}
We denote the set of admissible schedules for coalition $T$ that satisfies (\ref{relax1}) and (\ref{relax4}) by $\mathcal{AS}^{4}(T)$.

Clearly, for a given $\emptyset\neq T\subset N$, we have $\mathcal{AS}^{2}(T)\subseteq \mathcal{AS}^{3}(T)\subseteq\mathcal{AS}^{4}(T)$. Moreover, conditions (\ref{relax2}), (\ref{relax3}), and (\ref{relax4}) ensure that admissible rearrangements will not hurt the interests of players outside the coalition. 

On the other hand, one can easily check that in Example \ref{ex:act_coop}, $\hat{s}_{\{2\}}\in\mathcal{AS}^{4}(\{2\})$ and, as noted, changing from $s_{0}$ to $\hat{s}_{\{2\}}$ requires the active cooperation of player 3. Observe that this is possible because player 3 ``makes use'' of the idle times on machines. Given $s_{0}\in\mathcal{S}$ and $T\subset N$, a player $i\in N\setminus T$ can only ``make use'' of idle times in profit of coalition $T$ to reach a rearrangement $s\in\mathcal{AS}^{4}(T)$, as player 3 in Example \ref{ex:act_coop}, if there is a machine $j\in M$ such that the initial starting time of operation $(i,j)$, $t^{j}_{i}(s_{0})$, is smaller than the starting time of $(i,j)$ according to $s$; $$t^{j}_{i}(s_{0})<t^{j}_{i}(s).$$ In Example \ref{ex:act_coop}, such machine is $m_1$. Hence, obviously $s\notin \mathcal{AS}^{3}(T)$. Then, admissible rearrangements in $\mathcal{AS}^{2}(T)$ and $\mathcal{AS}^{3}(T)$ do not allow for the active cooperation of players outside $T$.

For every different set of admissible rearrangements, we can associate a cooperative TU-game. Let $(N,v^{k})$ denote the game where the set of admissible rearrangements for a coalition $T\subseteq N$ is $\mathcal{AS}^{k}(T)$, with $k=\{2,3,4\}$. Next, we provide a relationship between the cooperative games defined.
\begin{proposition}
\label{pro:relation_games}
Let $(N,M,s_0)$ be a unit open shop scheduling problem with initial schedule. Then, it holds
\begin{align*}
v^{2}(N)&=v^{3}(N)=v^{4}(N), & \text{and}  \\
v^{2}(T)&\leq v^{3}(T)\leq v^{4}(T) & \text{for all } T\subset N.
\end{align*}
\end{proposition}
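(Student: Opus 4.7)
The plan is to argue directly from the inclusions of the admissible-schedule sets, since the characteristic functions are defined as minima of $c_T(\cdot)$ over these sets (equivalently, maxima of savings).

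First, for the grand coalition $N$, note that $N \setminus N = \emptyset$, so each of conditions (\ref{relax1}), (\ref{relax2}), (\ref{relax3}), and (\ref{relax4}) is quantified over an empty index set and is therefore vacuously true for every $s \in \mathcal{S}$. Hence $\mathcal{AS}^{2}(N) = \mathcal{AS}^{3}(N) = \mathcal{AS}^{4}(N) = \mathcal{S}$, and in particular the optimal schedules and the associated savings coincide, giving $v^{2}(N) = v^{3}(N) = v^{4}(N)$.

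Next, fix $T \subset N$ with $T \neq \emptyset$. I will show the chain of inclusions $\mathcal{AS}^{2}(T) \subseteq \mathcal{AS}^{3}(T) \subseteq \mathcal{AS}^{4}(T)$. The first inclusion is immediate: any schedule satisfying the equality $t^{j}_{i}(s) = t^{j}_{i}(s_{0})$ for all $i \in N \setminus T$ and all $j \in M$ trivially satisfies the inequality $t^{j}_{i}(s) \leq t^{j}_{i}(s_{0})$, and condition (\ref{relax1}) is common to both sets. For the second inclusion, suppose $s \in \mathcal{AS}^{3}(T)$. Since all processing times are unit, for every $i \in N \setminus T$ and every $j \in M$ we have $C^{j}_{i}(s) = t^{j}_{i}(s) + 1 \leq t^{j}_{i}(s_{0}) + 1 = C^{j}_{i}(s_{0})$. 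Taking the maximum over $j \in M$ preserves the inequality, so $C_{i}(s) = \max_{j \in M} C^{j}_{i}(s) \leq \max_{j \in M} C^{j}_{i}(s_{0}) = C_{i}(s_{0})$, which is exactly condition (\ref{relax4}); hence $s \in \mathcal{AS}^{4}(T)$.

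Finally, since $v^{k}(T) = c_{T}(s_{0}) - \min_{s \in \mathcal{AS}^{k}(T)} c_{T}(s)$, and the minimum of a function over a larger set is no larger than the minimum over a smaller set, the inclusions $\mathcal{AS}^{2}(T) \subseteq \mathcal{AS}^{3}(T) \subseteq \mathcal{AS}^{4}(T)$ translate into the inequalities $v^{2}(T) \leq v^{3}(T) \leq v^{4}(T)$.

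I do not anticipate a serious obstacle here: the proof is essentially a bookkeeping argument on the nested constraints. The only subtlety worth flagging explicitly is the use of the unit-time assumption in going from inequalities on starting times to inequalities on completion times; without unit (or at least common) processing times the same step would still work because $C^{j}_{i}(s) = t^{j}_{i}(s) + p^{j}_{i}$ with $p^{j}_{i}$ determined by the job and machine (not by the schedule), but it is cleanest to invoke the unit assumption already in force.
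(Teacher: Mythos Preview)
Your proof is correct and follows exactly the paper's approach: it rests on the inclusions $\mathcal{AS}^{2}(T)\subseteq\mathcal{AS}^{3}(T)\subseteq\mathcal{AS}^{4}(T)$ for $T\subset N$ and their equality for $T=N$, which is precisely what the paper invokes. You simply spell out the (easy) verification of these inclusions in more detail than the paper does.
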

\begin{proof}
It follows from the observation that $\mathcal{AS}^{2}(T)\subseteq\mathcal{AS}^{3}(T)\subseteq\mathcal{AS}^{4}(T)$ for all $T\subset N$ and $\mathcal{AS}^{2}(N)=\mathcal{AS}^{3}(N)=\mathcal{AS}^{4}(N)$.
\end{proof}
\section{Non-emptiness of the core}
\label{sec:core}
Given a cooperative game $(N,v)$, a \emph{payoff vector} $x\in\mathbb{R}^{N}$ represents the payoffs to the players. Each component $x_{i}$ is interpreted as the allotment to player $i\in N$. The total payoff to a coalition $S\subseteq N$ is denoted by $x(S)=\sum\limits_{i\in S}x_{i}$ with $x(\emptyset)=0$. In order to study the set of stable allocations of the total cost savings $N$ can obtain, we introduce the \emph{core} of a cooperative game $(N,v)$ that consists of those payoff vectors that satisfy efficiency and every coalition $S\subset N$ receives at least its worth: $x(S)\ge v(S)$ (\citealp{g59}). Formally, the core of a cooperative game $(N,v)$ is:
$$C(v)=\{x\in\mathbb{R}^{N}\mid x(N)=v(N),\quad x(S)\geq v(S)\quad \text{for all}\quad S\subset N\}.$$

A game is \emph{balanced} if it has a non-empty core. Given a unit open shop scheduling problem with initial schedule $(N, M, s_{0})$, it follows from Proposition \ref{pro:relation_games} that any core element of the game $(N,v^{4})$ is also a core element of the games $(N,v^{3})$ and $(N,v^{2})$. Hence, we will focus on the core of the cooperative game $(N,v^{4})$. If we show that the game $(N,v^{4})$ is balanced, then independently of the chosen definition of admissible rearrangements, the associated game will admit stable allocations of the total cost savings.

Convexity (\citealp{s71}) and $\sigma$-component additivity (\citealp{cetal94}) are conditions that have been extensively studied to prove balancedness of sequencing games associated with different sequencing problems (see for instance \citealp{cetal94}; \citealp{hetal95}; \citealp{betal02}; \citealp{metal18}). One of the requirements for a game $(N,v)$ to be $\sigma$-component additive is that $v(\{i\})=0$ for all $i\in N$. One can easily check that in Example \ref{ex:act_coop}, $\hat{s}_{\{2\}}\in\mathcal{AS}^{4}(\{2\})$ and consequently $v^{4}(\{2\})=1$. So, the game $(N,v^{4})$ is not $\sigma$-component additive. On the other hand, it is well-known that convexity implies superadditivity. A game $(N,v)$ is said to be \textit{superadditive} if $v(S\cup T)\geq v(S)+v(T)$ for all $S,T\subseteq N$, $S\cap T=\emptyset$. In the next 13-player example, we show that the game $(N,v^{4})$ arising from a unit time open shop scheduling problem with initial schedule $(N,M,s_0)$ need not be superadditive (nor convex). \footnote{Although Example \ref{ex:superadd} is a 13-player game, we can show that convexity does not hold for a 9-player game.}
\begin{example}
\label{ex:superadd}
Consider $(N,M,s_{0})$ with $N=\{1,2,3,4,5,6,7,8,9,10,11,12,13\}$, $M=\{1,2,3,4\}$, and the initial schedule $s_{0}$ as follows:
\begin{center}
\begin{tabular}{ c|c|c|c|c|c|c|c|c|c|c|c|c|c|c|c|c|c|c| }
 \cline{2-19} 
$m_{1}$  & 1 & 2 & 3 & 4 & 5 & 6 & 7 & 8 & 9 & 10 & 11 & 12 & 13 & & & & & \\ \cline{2-19} 
$m_{2}$  & 13 & 12 & 10 & 5 & 4 & & 3 & 1 & 2 & 8 & & 11 & 6 & 7 & 9 & & & \\ \cline{2-19} 
$m_{3}$  & 4 & 5 & 1 & 2 & 12 & 3 & 6 & 7 & 8 & 9 & & & 11 & 10 & 13 & & & \\ \cline{2-19} 
$m_{4}$  & 12 & 9 & 2 & 10 & 1 &  &  & 3 & 4 & 5 & & & & 11 & 6 & 7 & 8 & 13 \\ \cline{2-19} 
\end{tabular}.
\end{center}
Let $S=\{1,2\}$. It is easy to check that $\hat{s}_{\{1,2\}}\in\mathcal{AS}^{4}(\{1,2\})$ is:
\begin{center}
\begin{tabular}{ c|c|c|c|c|c|c|c|c|c|c|c|c|c|c|c|c|c| } 
 \cline{2-18} 
$m_{1}$  & 1 & 2 & 3 & 4 & 5 & 6 & 7 & 8 & 9 & 10 & 11 & 12 & 13 & & & & \\ \cline{2-18} 
$m_{2}$  & 13 & 12 & 10 & 5 & 4 & 3 & 1 & 2 & 8 & 11 & 6 & 7 & 9 & & & &  \\ \cline{2-18} 
$m_{3}$  & 4 & 5 & 1 & 2 & 12 & & 3 & 6 & 7 & 8 & 9 & 11 & 10 & 13 & & &  \\ \cline{2-18} 
$m_{4}$  & 12 & 9 & 2 & 10 & 1 &  & & 3 & 4 & 5 & & & 11 & 6 & 7 & 8 & 13 \\ \cline{2-18} 
\end{tabular}.
\end{center}
Hence, $v^{4}(\{1,2\})=2$. Now, let $T=\{4,5\}$. It is easy to check that $\hat{s}_{\{4,5\}}\in\mathcal{AS}^{4}(\{4,5\})$ is:
\begin{center}
\begin{tabular}{ c|c|c|c|c|c|c|c|c|c|c|c|c|c|c|c| } 
 \cline{2-16} 
$m_{1}$  & 1 & 2 & 3 & 4 & 5 & 6 & 7 & 8 & 9 & 10 & 11 & 12 & 13 & &  \\ \cline{2-16} 
$m_{2}$  & 13 & 12 & 10 & 5 & 4 & & 3 & 1 & 2 & 8  & & 11 & 6 & 7 & 9   \\ \cline{2-16} 
$m_{3}$  & 4 & 5 & 1 & 2 & 12 & & & 3 & 6 & 7 & 8 & 9 & 11 & 10 & 13  \\ \cline{2-16} 
$m_{4}$  & 12 & 9 & 2 & 10 & 1  & 3 & 4 & 5 & 11 & 6 & 7 & 8  & & 13 & \\ \cline{2-16} 
\end{tabular}.
\end{center}
Hence, $v^{4}(\{4,5\})=4$. Finally, it is easy to see that $\hat{s}_{\{4,5\}}\in\mathcal{AS}^{4}(\{1,2,4,5\})$ is also an optimal schedule for $\{1,2,4,5\}$. Then, $v^{4}(\{1,2,4,5\})=4$, and the game $(N,v^{4})$ is not superadditive (nor convex).
\end{example}

As the structure of the game does not help to study balancedness, in our main result we show that a particular allocation of the total cost savings lays in the core. Given a unit time open shop scheduling problem with initial schedule $(N,M,s_0)$ and in view of Algorithm \ref{Algo:AA} (see also Example \ref{ex:algo}), for all $j\in M$ there exists an optimal schedule, that we call $\hat{s}_{N}^{j}$, for $N$ such that its unique compatible scheme $\hat{\sigma}\in\Sigma$ satisfies $\hat{\sigma}^{j}=\sigma_{0}^{j}$ and, moreover, machine $j$ does not incur any idle time (operations on machine $j$ are processed continuously) according to $\hat{s}_{N}^{j}$. For any $j\in M$ we introduce the $j$-based allocation $\mu^{j}(N,M,s_0)\in\mathbb{R}^{N}$ by:
\begin{equation}
\label{def:all_rule_j}
\mu^{j}_{i}(N,M,s_0)=c_{i}(s_{0})-c_{i}(\hat{s}^{j}_{N})=C_{i}(s_{0})-C_{i}(\hat{s}^{j}_{N}) \quad \text{for all}\,\, i\in N.
\end{equation}
Given $j\in M$, this allocation assigns to each player the difference between her initial waiting cost and the cost associated with the optimal schedule $\hat{s}^{j}_{N}$ for $N$. It is easy to see that this allocation is efficient, since
$$\sum\limits_{i\in N}\mu^{j}_{i}(N,M,s_0)=\sum\limits_{i\in N}(c_{i}(s_{0})-c_{i}(\hat{s}^{j}_{N}))=c_{N}(s_{0})-c_{N}(\hat{s}^{j}_{N})=v^{4}(N),$$
where the last equality follows from the fact that $\hat{s}_{N}^{j}$ is optimal for $N$. However, as the next example shows, $\mu^{j}(N,M,s_0)$ does not need to satisfy $\mu^{j}_{i}(N,M,s_0)\geq v^{4}(i)$ for all $i\in N$, and hence need not be a core element.
\begin{example}
\label{ex:all_rule_j_notcore}
Consider $(N,M,s_0)$ with $N=\{1,2,3,4\}$, $M=\{1,2\}$, and the initial schedule $s_{0}$ as follows:
\begin{center}
\begin{tabular}{ c|c|c|c|c| } 
\cline{2-5}
$m_{1}$ & 1 & 2 & 3 & 4 \\  \cline{2-5} 
$m_{2}$ & 3 & 4 & 1 & 2 \\  \cline{2-5}
\end{tabular}.
\end{center}
Let $j=\{1\}$. Then, it is easy to check that $\hat{s}^{1}_{N}$ is
\begin{center}
\begin{tabular}{ c|c|c|c|c| } 
\cline{2-5}
$m_{1}$ & 1 & 2 & 3 & 4 \\  \cline{2-5} 
$m_{2}$ & 2 & 1 & 4 & 3 \\  \cline{2-5}
\end{tabular}.
\end{center}
Now, consider $i=\{3\}$. Then,
$$\mu^{1}_{3}(N,M,s_0)=C_{3}(s_{0})-C_{3}(\hat{s}^{1}_{N})=3-4=-1<v^{4}(\{3\})=0.$$
Hence, the allocation $\mu^{1}(N,M,s_0)$ is not a core element of the game $(N,v^{4})$.
\end{example}

Given $(N,M,s_{0})$, in Theorem \ref{teo} we will show that although the $j$-based allocation does not need to be a core element, surprisingly, the average of all $\mu^{j}(N,M,s_{0})$ always belongs to the core.
For any $(N,M,s_0)$, the average machine-based allocation rule $\overline{\mu}(N,M,s_0)$ is defined by
\begin{equation}
\label{def:all_rule_ave}
\bar{\mu}(N,M,s_0)=\frac{1}{m}\sum\limits_{j\in M}\mu^{j}(N,M,s_0).
\end{equation}

In order to show our main result, let us first prove a technical lemma.

\begin{lemma}
\label{lemma}
Let $(N,M,s_0)$ be a unit time open shop scheduling problem with initial schedule. Then, for all $\emptyset\neq T\subset N$, $i\in T$, it holds 
\begin{equation}
\label{CR}
\frac{1}{m}\sum\limits_{j\in M}\Big(C_{i}(s_0)-\left\lceil \frac{C^{j}_{i}(\hat{s}_{T})}{m} \right\rceil m\Big) \geq C_{i}(s_0)-C_{i}(\hat{s}_{T}),
\end{equation}
where $\hat{s}_{T}\in\mathcal{AS}^{4}(T)$ is optimal for $T$.
\end{lemma}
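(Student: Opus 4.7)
The plan is first to simplify (\ref{CR}) by a routine cancellation. Because the term $C_i(s_0)$ appears in every summand on the left and does not depend on $j$, averaging over $j\in M$ pulls out a bare $C_i(s_0)$, while the factor of $m$ inside the ceiling cancels with the $1/m$ out front. After cancelling $C_i(s_0)$ from both sides and multiplying by $-1$, (\ref{CR}) is equivalent to
$$C_i(\hat{s}_T)\;\geq\;\sum_{j\in M}\left\lceil \frac{C_i^j(\hat{s}_T)}{m}\right\rceil.$$
Notably $s_0$ has disappeared, so the lemma will in fact follow from a property of any feasible schedule of $T$.

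Next I would exploit the unit-time open shop structure. Since $p_i^j=1$ for every $j$ and two operations of the same job cannot be processed simultaneously, in any semi-active schedule the values $a_j:=C_i^j(\hat{s}_T)$ for $j\in M$ are $m$ pairwise distinct positive integers, and by definition $c:=C_i(\hat{s}_T)=\max_{j\in M}a_j$. Thus the lemma reduces to the purely combinatorial claim: for any $m$ distinct positive integers $a_1,\ldots,a_m$ with maximum $c$, one has $c\geq\sum_{j=1}^m\lceil a_j/m\rceil$.

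For this combinatorial step I would set $n=\lceil c/m\rceil$ and partition $\{1,2,\ldots,nm\}$ into $n$ consecutive blocks $B_1,\ldots,B_n$ of size $m$. Writing $n_r$ for the number of $a_j$'s that lie in $B_r$, one has $\sum_{r=1}^n n_r=m$, $n_r\leq m$, and $\sum_{j=1}^m\lceil a_j/m\rceil=\sum_{r=1}^n r\,n_r$. Splitting this sum as $\sum_{r=1}^{n-1}r\,n_r+n\,n_n$ and bounding each $r$ in the head by $n-1$ gives
$$\sum_{r=1}^n r\,n_r\;\leq\;(n-1)(m-n_n)+n\,n_n\;=\;(n-1)m+n_n.$$
Since $c$ lies in the last block $B_n=\{(n-1)m+1,\ldots,nm\}$ and is the maximum among the $n_n$ distinct positive integers that fall in $B_n$, we get $c\geq (n-1)m+n_n$, which closes the inequality.

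The only mildly delicate point is justifying that the completion times $a_j$ can be treated as $m$ distinct positive integers, which follows from restricting attention to semi-active schedules; beyond that, the heart of the argument is the short block-counting estimate $\sum_j\lceil a_j/m\rceil\leq (n-1)m+n_n\leq c$, and the rest is bookkeeping.
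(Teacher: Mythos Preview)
Your proof is correct and rests on the same combinatorial core as the paper's: both exploit the block structure and the fact that the non-overlapping unit operations of job $i$ limit how many completion times can fall in the final block. Your route is cleaner in two respects. First, you immediately cancel $C_i(s_0)$ and reduce (\ref{CR}) to the pure inequality $C_i(\hat{s}_T)\geq \sum_{j}\lceil C_i^j(\hat{s}_T)/m\rceil$, which isolates the combinatorial content; the paper instead carries $C_i(s_0)$ through the whole computation. Second, your block-counting bound $\sum_r r\,n_r\leq (n-1)m+n_n\leq c$ handles uniformly what the paper splits into two cases (according to whether $C_i^{j^*}(\hat{s}_T)$ is a multiple of $m$), where the paper's Case~2 requires a longer chain of inequalities built around the set $J^*$ --- which is exactly your $n_n$ in disguise, with the bound $|J^*|\leq C_i^{j^*}(\hat{s}_T)-\lfloor C_i^{j^*}(\hat{s}_T)/m\rfloor m$ corresponding to your $n_n\leq c-(n-1)m$. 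The one place to tighten slightly is the justification that the $a_j$ are distinct positive \emph{integers}: non-overlap of job~$i$'s unit operations gives pairwise distance $\geq 1$ (hence distinctness), but integrality needs the observation that an optimal $\hat{s}_T\in\mathcal{AS}^4(T)$ may be replaced by the semi-active schedule of the same scheme, which remains admissible and has integer starting times; the paper relies on this implicitly as well.
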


\begin{proof}
Let $(N,M,s_0)$, $T\subset N$, and $i\in T$. Let $j^{*}\in M$ be such that $C_{i}(\hat{s}_{T})=C^{j^{*}}_{i}(\hat{s}_{T})$. Then, clearly $C^{j}_{i}(\hat{s}_{T})\leq C^{j^{*}}_{i}(\hat{s}_{T})$ for all $j\in M$. Moreover, $$C^{j^{*}}_{i}(\hat{s}_{T})\leq \left \lceil \frac{C^{j*}_{i}(\hat{s}_{T})}{m}\right\rceil m.$$ Here, if according to $\hat{s}_{T}$ we make consecutive blocks of $m$ units of time from the moment at which the system starts processing, $\left\lceil\frac{C^{j^{*}}_{i}(\hat{s}_{T})}{m}\right\rceil$ stands for the block in which the operation of player $i$ is processed on machine $j^{*}$.

We distinguish two cases:
\vspace{0.5cm}
\\
\noindent\textit{Case 1:} $C^{j^{*}}_{i}(\hat{s}_{T})=\left \lceil \frac{C^{j^{*}}_{i}(\hat{s}_{T})}{m}\right\rceil m$.\vspace{0.2cm}
\\
Graphically:
\begin{center}
\begin{tabular}{ c|ccccc|ccccc|ccccc }
 \cline{2-13} 
$j^{*}$ & & & & & & & & & & \multicolumn{1}{|c|}{$i$} &  & \multicolumn{1}{c|}{} &  & &     \\ \cline{2-16} 
  & & & & & & & & & & & & & & & \multicolumn{1}{c|}{}  \\ \cline{2-16} 
  & & & & & & & & & & &  &  & \multicolumn{1}{c|}{} & &   \\ \cline{2-14} 
  & & & & & & & & & &  &\multicolumn{1}{c|}{} & & & &     \\ \cline{2-12} 
\end{tabular}
\begin{tabular}{ cccccccccccccccc }
 & & & & & & & & & &   & \multicolumn{1}{c}{$\left\lceil \frac{C^{j^{*}}_{i}(\hat{s}_{T})}{m}\right\rceil m$} &  &  & &     \\ 
\end{tabular}
\end{center}
It is straightforward to see from the definition of $j^{*}\in M$ that 
\begin{equation}
\label{case 1: j*}
C_{i}(s_0)-\left\lceil \frac{C^{j^{*}}_{i}(\hat{s}_{T})}{m}\right\rceil m=C_{i}(s_0)-C_i(\hat{s}_{T}),
\end{equation}
and for any other machine $j\in M$
\begin{align}
\label{case1: j}
C_{i}(s_{0})-\left \lceil \frac{C^{j}_{i}(\hat{s}_{T})}{m}\right\rceil m &\geq C_{i}(s_{0})-\left \lceil \frac{C^{j^{*}}_{i}(\hat{s}_{T})}{m}\right\rceil m \\
&=C_{i}(s_{0})-C_{i}(\hat{s}_{T}). \nonumber
\end{align}
Then, from (\ref{case 1: j*}) and (\ref{case1: j})
\begin{align*}
\frac{1}{m}\sum\limits_{j\in M}\left( C_{i}(s_{0})-\left \lceil \frac{C^{j}_{i}(\hat{s}_{T})}{m}\right\rceil m \right) & \geq \frac{1}{m}\sum\limits_{j\in M} \left( C_{i}(s_{0})-C_{i} (\hat{s}_{T})\right)  \\
& =C_{i}(s_{0})-C_{i}(\hat{s}_{T}),
\end{align*}
which finishes Case 1.
\vspace{0.2cm}
\\
\noindent\textit{Case 2:} $\label{case2:statement} C^{j^{*}}_{i}(\hat{s}_{T})<\left \lceil \frac{C_{i}^{j^{*}}(\hat{s}_{T})}{m}\right\rceil  m$.
\vspace{0.2cm}\\
Graphically:
\begin{center}
\begin{tabular}{ c|ccccc|ccccc|ccccc|ccccc }
 \cline{2-18} 
 & & & & & & & & & & & & & \multicolumn{1}{c|}{} & &  &  & \multicolumn{1}{c|}{} &  & &     \\ \cline{2-21} 
$j^{*}$ & & & & &  & & & & & & & & \multicolumn{1}{|c|}{$i$} & & & & & & & \multicolumn{1}{c|}{}  \\ \cline{2-21} 
  & & & & & & & & & & & & & \multicolumn{1}{c|}{} & & &  &  & \multicolumn{1}{c|}{} & &   \\ \cline{2-19} 
  & & & & & & & & & & & & & \multicolumn{1}{c|}{} & &  &\multicolumn{1}{c|}{} & & & &     \\ \cline{2-17}
\end{tabular}
\end{center}
\hspace{6 cm} {\scriptsize{$\left\lfloor \frac{C^{j^{*}}_{i}(\hat{s}_{T})}{m}\right\rfloor m$}} \, {\scriptsize{$C^{j^{*}}_{i}(\hat{s}_{T})$}} \, {\scriptsize{$\left\lceil \frac{C^{j^{*}}_{i}(\hat{s}_{T})}{m}\right\rceil m$}}
\vspace{0.4cm}\\
By definition of $j^{*}$, 
$$\left\lceil\frac{C_{i}^{j}(\hat{s}_{T})}{m}\right\rceil \leq \left \lceil \frac{C_{i}^{j^{*}}(\hat{s}_{T})}{m}\right\rceil$$ for all $j\in M$.

Let $J^{*}=\Big\{j\in M : \left\lceil\frac{C_{i}^{j}(\hat{s}_{T})}{m}\right\rceil=\left\lceil\frac{C_{i}^{j^{*}}(\hat{s}_{T})}{m}\right\rceil \Big\}$. Note that $J^{*}\neq\emptyset$ since $j^{*}\in J^{*}$. On the other hand, if $j\in M\setminus J^{*}$, then $$\left\lceil\frac{C_{i}^{j}(\hat{s}_{T})}{m}\right\rceil < \left \lceil \frac{C_{i}^{j^{*}}(\hat{s}_{T})}{m}\right\rceil$$ or equivalently 
\begin{equation}
\label{case2:j not in J bar}
\left\lceil\frac{C_{i}^{j}(\hat{s}_{T})}{m}\right\rceil \leq \left \lceil \frac{C_{i}^{j^{*}}(\hat{s}_{T})}{m}\right\rceil -1
\end{equation}
To establish an upper bound for $| J^{*}|$, notice that all $j\in J^{*}$, $j\neq j^{*}$, process the operation of job $i$ in the same block as $j^{*}$. Additionally, we have $C^{j}_{i}(\hat{s}_{T})<C^{j^{*}}_{i}(\hat{s}_{T})$. So, there are as much as $C^{j^{*}}_{i}(\hat{s}_{T})-\left\lfloor \frac{C^{j^{*}}_{i}(\hat{s}_{T})}{m} \right \rfloor m$ different machines in $J^{*}$, i.e.
\begin{equation}
\label{case2:card J}
|J^{*}|\leq C^{j^{*}}_{i}(\hat{s}_{T})-\left\lfloor \frac{C^{j^{*}}_{i}(\hat{s}_{T})}{m}\right\rfloor m.
\end{equation}
Hence,
\begin{align}
\label{case2: card M-J}
|M\setminus J^{*}| & \geq m-\Big( C_{i}^{j^{*}}(\hat{s}_{T})- \left\lfloor \frac{C^{j^{*}}_{i}(\hat{s}_{T})}{m}\right \rfloor m\Big) \nonumber \\
& = \left\lceil \frac{C^{j^{*}}_{i}(\hat{s}_{T})}{m}\right \rceil m - C_{i}^{j^{*}}(\hat{s}_{T}),
\end{align}
since $\left\lfloor \frac{C^{j^{*}}_{i}(\hat{s}_{T})}{m}\right\rfloor = \left\lceil \frac{C^{j^{*}}_{i}(\hat{s}_{T})}{m}\right\rceil -1$. Then,

\begin{align*}
& \sum\limits_{j\in M} \Big( C_{i}(s_{0})-\left\lceil \frac{C^{j}_{i}(\hat{s}_{T})}{m}\right\rceil m\Big)= \\
=& \sum\limits_{j\in J^{*}} \Big( C_{i}(s_{0})- \left\lceil \frac{C^{j}_{i}(\hat{s}_{T})}{m}\right\rceil m \Big) + \sum\limits_{j\in M\setminus J^{*}}\Big( C_{i}(s_{0})-  \left\lceil \frac{C^{j}_{i}(\hat{s}_{T})}{m}\right\rceil m\Big) \\
\geq &\sum\limits_{j\in J^{*}} \Big( C_{i}(s_{0})- \left\lceil \frac{C^{j^{*}}_{i}(\hat{s}_{T})}{m}\right\rceil m \Big)+\sum\limits_{j\in M\setminus J^{*}} \Big( C_{i}(s_{0})-  \left\lceil \frac{C^{j^{*}}_{i}(\hat{s}_{T})}{m}\right\rceil m +m \Big)
\end{align*}
\begin{align*}
=&|J^{*}| \Big(C_{i}(s_{0})-\left\lceil \frac{C^{j^{*}}_{i}(\hat{s}_{T})}{m}\right\rceil m \Big)+\mid M\setminus J^{*}\mid \Big( C_{i}(s_{0})-\left\lceil \frac{C^{j^{*}}_{i}(\hat{s}_{T})}{m}\right\rceil m+m \Big) \\
\geq & \Big( C^{j^{*}}_{i} (\hat{s}_{T})- \left\lfloor \frac{C^{j^{*}}_{i}(\hat{s}_{T})}{m}\right\rfloor m \Big) \Big( C_{i}(s_{0})-\left\lceil \frac{C^{j^{*}}_{i}(\hat{s}_{T})}{m}\right\rceil m \Big)+\\
+ & \Big( \left\lceil \frac{C^{j^{*}}_{i}(\hat{s}_{T})}{m}  \right\rceil m-C^{j^{*}}_{i}(\hat{s}_{T}) \Big) \Big( C_{i}(s_{0})-\left\lceil \frac{C^{j^{*}}_{i}(\hat{s}_{T})}{m}  \right\rceil m+m\Big)\\
=&\Big( C_{i}(s_{0})-\left\lceil \frac{C^{j^{*}}_{i}(\hat{s}_{T})}{m}  \right\rceil m \Big)\Bigg( m\Big( \left\lceil \frac{C^{j^{*}}_{i}(\hat{s}_{T})}{m}  \right\rceil - \left\lfloor \frac{C^{j^{*}}_{i}(\hat{s}_{T})}{m}\right\rfloor  \Big)\Bigg)+ m \Big( \left\lceil \frac{C^{j^{*}}_{i}(\hat{s}_{T})}{m}  \right\rceil m - C^{j^{*}}_{i}(\hat{s}_{T})\Big)\\\
=& m \Bigg[ \Big( C_{i} (s_{0})-\left\lceil \frac{C^{j^{*}}_{i}(\hat{s}_{T})}{m}  \right\rceil m \Big)+\Big( \left\lceil \frac{C^{j^{*}}_{i}(\hat{s}_{T})}{m}  \right\rceil m -C_{i}^{j^{*}}(\hat{s}_{T})\Big)\Bigg] \\
=& m\Big( C_{i}(s_{0})-C^{j^{*}}_{i}(\hat{s}_{T})\Big)=m(C_{i}(s_{0})-C_{i}(\hat{s}_{T})),
\end{align*}
where the first inequality follows from the definition of $J^{*}$ and (\ref{case2:j not in J bar}). The second inequality follows from (\ref{case2:card J}), (\ref{case2: card M-J}), and the observation that for all $j\in J^{*}$ and $j'\in M\setminus J^{*}$, $C_{i}(s_{0})-\left\lceil \frac{C^{j^{*}}_{i}(\hat{s}_{T})}{m}  \right\rceil m<C_{i}(s_{0})-\left\lceil \frac{C^{j^{*}}_{i}(\hat{s}_{T})}{m} \right\rceil m+m$. The last equality is by definition of $j^{*}$.

Consequently, (\ref{CR}) holds and this finishes Case 2.
\end{proof}
Now, we can state the main result of the paper.
\begin{theorem}
\label{teo}
Let $(N,M,s_0)$ be a unit time open shop scheduling problem with initial schedule. Then, $\bar{\mu}(N,M,s_0)\in C(v^{4})$.
\end{theorem}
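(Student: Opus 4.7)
The plan is to verify efficiency and coalition rationality separately. Efficiency is immediate: by (\ref{def:all_rule_j}) and the optimality of $\hat{s}_{N}^{j}$, every $j$-based allocation $\mu^{j}(N,M,s_{0})$ sums to $v^{4}(N)$, so the average $\bar{\mu}(N,M,s_{0})$ is efficient as well.

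For coalition rationality, fix $\emptyset\neq T\subset N$ and let $\hat{s}_{T}\in\mathcal{AS}^{4}(T)$ be optimal for $T$. The crucial structural observation, drawn from Algorithm \ref{Algo:AA} (cf. Example \ref{ex:algo}), is that $C_{i}(\hat{s}_{N}^{j})=\left\lceil \sigma_{0}^{j}(i)/m\right\rceil m$ for every $i\in N$ and every $j\in M$: in $\hat{s}_{N}^{j}$ machine $j$ runs continuously following $\sigma_{0}^{j}$, so each player finishes within the block of length $m$ that contains her position on machine $j$. Plugging this identity into the definitions of $\bar{\mu}$ and $v^{4}(T)$, the desired inequality $\sum_{i\in T}\bar{\mu}_{i}(N,M,s_{0})\geq v^{4}(T)$ rearranges to
\begin{equation*}
\sum_{i\in T}C_{i}(\hat{s}_{T})\;\geq\;\sum_{j\in M}\sum_{i\in T}\left\lceil \frac{\sigma_{0}^{j}(i)}{m}\right\rceil .
\end{equation*}

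I would then apply Lemma \ref{lemma} to each $i\in T$ (its conclusion simplifies to $C_{i}(\hat{s}_{T})\geq\sum_{j\in M}\lceil C_{i}^{j}(\hat{s}_{T})/m\rceil$) and sum over $i$, so the task reduces to proving, for each fixed $j\in M$,
\begin{equation*}
\sum_{i\in T}\left\lceil \frac{C_{i}^{j}(\hat{s}_{T})}{m}\right\rceil \;\geq\; \sum_{i\in T}\left\lceil \frac{\sigma_{0}^{j}(i)}{m}\right\rceil .
\end{equation*}
Denoting by $\sigma$ the scheme compatible with $\hat{s}_{T}$, the admissibility condition (\ref{relax1}) forces $\sigma^{j}(i)=\sigma_{0}^{j}(i)$ for all $i\in N\setminus T$, so on machine $j$ the coalition $T$ occupies exactly the same set of positions in $\sigma$ as in $\sigma_{0}$, merely permuted among themselves; hence $\sum_{i\in T}\lceil \sigma^{j}(i)/m\rceil=\sum_{i\in T}\lceil \sigma_{0}^{j}(i)/m\rceil$. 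Since at most one unit-time operation can complete per unit of time on machine $j$, we have $C_{i}^{j}(\hat{s}_{T})\geq \sigma^{j}(i)$, hence $\lceil C_{i}^{j}(\hat{s}_{T})/m\rceil\geq \lceil \sigma^{j}(i)/m\rceil$, and summation over $i\in T$ closes the argument.

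The main obstacle is Lemma \ref{lemma} itself, which supplies the non-trivial bound comparing $C_{i}(\hat{s}_{T})$ with the averaged block-end times $\lceil C_{i}^{j}(\hat{s}_{T})/m\rceil m$; once it is available, the theorem follows from straightforward bookkeeping together with the simple combinatorial fact that admissible rearrangements preserve the set of positions occupied by $T$ on every machine, and hence preserve the sum of their block indices.
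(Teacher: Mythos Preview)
Your proof is correct and follows essentially the same strategy as the paper: both use the identity $C_{i}(\hat{s}_{N}^{j})=\lceil\sigma_{0}^{j}(i)/m\rceil m$, reduce coalition rationality to the machine-wise inequality $\sum_{i\in T}\lceil C_{i}^{j}(\hat{s}_{T})/m\rceil\geq\sum_{i\in T}\lceil\sigma_{0}^{j}(i)/m\rceil$, and finish with Lemma~\ref{lemma}. The only difference is that the paper establishes this inequality by decomposing $T$ into its maximally connected components $R\in T/\sigma_{0}^{j}$ and arguing that positions are permuted within each $R$, whereas you observe directly that condition~(\ref{relax1}) fixes the positions of $N\setminus T$, hence the multiset of positions occupied by $T$ is preserved globally; this is exactly the streamlined argument the paper itself adopts later in the proof of Theorem~\ref{teo_prime}.
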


\begin{proof}
Let $(N,M,s_0)$, $\mu^{j}(N,M,s_0)=\mu^{j}$, and $\bar{\mu}(N,M,s_0)=\bar{\mu}$. First, we show that the allocation rule $\bar{\mu}\in\mathbb{R}^{N}$ is efficient.
\begin{align*}
\bar{\mu}(N)=\sum\limits_{i\in N}\bar{\mu}_{i}=&\sum\limits_{i\in N}\frac{1}{m}\sum\limits_{j\in M}\big(C_{i}(s_{0})-C_{i}(\hat{s}^{j}_{N})\big)\\
&=\frac{1}{m}\sum\limits_{j\in M}\sum\limits_{i\in N}\left(C_{i}(s_{0})-C_{i}(\hat{s}^{j}_{N})\right)\\
&=\frac{1}{m}\sum\limits_{j\in M}c_{N}(s_{0})-c_{N}(\hat{s}^{j}_{N})\\
&=\frac{1}{m}m\left[c_{N}(s_{0})-c_{N}(\hat{s}_{N})\right]=v^{4}(N),
\end{align*}
where the fifth equality follows from $c_{N}(\hat{s}^{j}_{N})=c_{N}(\hat{s}^{j'}_{N})=c_{N}(\hat{s}_{N})$ for all $j,j'\in M$, $j\neq j'$. It remains to prove $\bar{\mu}(T)\geq v(T)$ for all $T\subset N$.

\begin{align*}
\bar{\mu}(T)&=\sum\limits_{i\in T}\bar{\mu}_{i}=\sum\limits_{i\in T}\frac{1}{m}\sum\limits_{j\in M}\mu^{j}_{i}\\
&=\sum\limits_{i\in T}\frac{1}{m}\sum\limits_{j\in M} \Big( C_{i}(s_{0})-C_{i}(\hat{s}^{j}_{N})\Big)\\
&=\sum\limits_{i\in T}\frac{1}{m}\sum\limits_{j\in M} \Big( C_{i}(s_{0})-\left\lceil \frac{\sigma_{0}^{j}(i)}{m}\right\rceil m \Big) \\
&=\frac{1}{m}\sum\limits_{j\in M}\Bigg( \sum\limits_{i\in T} C_{i}(s_{0})-\sum\limits_{i\in T}\left\lceil \frac{\sigma_{0}^{j}(i)}{m} \right\rceil m \Bigg)\\
&=\frac{1}{m}\sum\limits_{j\in M}\Bigg(\sum\limits_{i\in T} C_{i}(s_{0})-\sum\limits_{R\in T/ \sigma^{j}_{0}}\sum\limits_{i\in R} \left\lceil \frac{\sigma_{0}^{j}(i)}{m}\right\rceil m \Bigg)\\
&\geq \frac{1}{m}\sum\limits_{j\in M}\Bigg(\sum\limits_{i\in T} C_{i}(s_{0})-\sum\limits_{R\in T/ \sigma^{j}_{0}}\sum\limits_{i\in R} \left\lceil \frac{C_{i}^{j}(\hat{s}_{T})}{m} \right\rceil m \Bigg)
\end{align*}
\begin{align*}
&= \frac{1}{m}\sum\limits_{j\in M}\Bigg(\sum\limits_{i\in T} C_{i}(s_{0})-\sum\limits_{i\in T}\left\lceil \frac{C_{i}^{j}(\hat{s}_{T})}{m}\right\rceil m \Bigg)\\
&=\sum\limits_{i\in T}\frac{1}{m}\sum\limits_{j\in M}\Bigg( C_{i}(s_{0})-\left\lceil \frac{C_{i}^{j}(\hat{s}_{T})}{m}\right\rceil m \Bigg)\\
&\geq \sum\limits_{i\in T} C_{i}(s_0)-C_{i}(\hat{s}_{T})=v(T).
\end{align*}
The third equality follows from the fact that the unique compatible scheme $\hat{\sigma}$ with $\hat{s}_{N}^{j}$ satisfies $\hat{\sigma}^{j}=\sigma_{0}^{j}$ and moreover, in $\hat{s}_{N}^{j}$ operations are processed continuously on machine $j$. So, in view of Algorithm \ref{Algo:AA} (see also Example \ref{ex:algo}), $C_{i}(\hat{s}^{j}_{N})=\left\lceil \frac{{\sigma}^{j}_{0}(i)}{m}\right\rceil m$. The first inequality holds due to for every $j\in M$, and by definition of $\mathcal{AS}^{4}(T)$, players in $R\in T/\sigma_{0}^{j}$ can only switch their positions with other players in $R$. Then, if $R=\{i_1,i_2,\ldots ,i_r\}$ and $\hat{\sigma}_{T}$ is the unique optimal scheme compatible with $\hat{s}_{T}$, it holds $\{\sigma_{0}^{j}(i_{1}),\ldots ,\sigma_{0}^{j}(i_{r})\}=\{\hat\sigma^{j}_{T}(i_{1}),\ldots ,\hat{\sigma}^{j}_{T}(i_{r})\}.$ Moreover, for $i_k\in R$, $C^{j}_{i_{k}}(\hat{s}_{T})\geq \hat{\sigma}_{T}^{j}(i_{k})$, and hence $\sum\limits_{i\in R}\left\lceil \frac{\sigma^{j}_{0}(i)}{m}\right\rceil m\leq \sum\limits_{i\in R}\left\lceil \frac{C^{j}_{i}(\hat{s}_{T})}{m}\right\rceil m$. The last inequality follows from Lemma \ref{lemma}.
\end{proof}
\section{Relaxed unit open shop scheduling games}
\label{sec:relax}

In this last section, we study whether the balancedness result still holds when we relax the definition of admissible rearrangements for a coalition.  We follow the same approach introduced in \cite{cetal93} and later used by \cite{s06}. In particular, we would like to allow the players of a coalition to jump over players outside the coalition if such a switch does not hurt them. That is to say, it does not imply an increase in their completion times. 

As \cite{cetal93}, we first change condition (\ref{relax1}) by the following weaker condition on the schemes:

Let $(M,N,s_0)$, $\emptyset\neq T\subseteq N$, and $\sigma_{0}\in\Sigma$ denotes the unique scheme compatible with $s_0$. We say that a schedule $s$ is admissible for $T$ if for all $i\in N\setminus T$ and $j\in M$ it holds

\begin{equation}
\label{relax1'}
\sigma^{j}_{0}(i)=\sigma^{j}(i), \tag{$1^{\prime}$}
\end{equation}
where $\sigma$ is the unique scheme compatible with $s$.

Of course, to prevent hurting players in $N\setminus T$ we will combine (\ref{relax1'}) with (\ref{relax2}), (\ref{relax3}), and (\ref{relax4}) to obtain $\mathcal{AS}^{2^{\prime}}(T)$, $\mathcal{AS}^{3^{\prime}}(T)$, and $\mathcal{AS}^{4^{\prime}}(T)$, respectively. For each new set of admissible rearrangements, we can associate the corresponding cooperative game $(N,v^{2^{\prime}})$, $(N,v^{3^{\prime}})$, and $(N,v^{4^{\prime}})$. Proposition \ref{prop:games_prime} provides the relation among these games.

\begin{proposition}
\label{prop:games_prime}
Let $(N,M,s_0)$ be a unit open shop scheduling problem with initial schedule. Then, it holds 
\begin{align*}
v^{2^{\prime}}(N)=v^{3^{\prime}}(N)=v^{4^{\prime}}(N) &\\
v^{2^{\prime}}(T)\leq v^{3^{\prime}}(T)\leq v^{4^{\prime}}(T) & \qquad  \forall T\subset N.
\end{align*}
\end{proposition}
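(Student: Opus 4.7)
My plan is to mirror the proof of Proposition~\ref{pro:relation_games} essentially verbatim, with condition (\ref{relax1'}) replacing condition (\ref{relax1}), since nothing in the argument actually uses the specific form of the scheme-based condition.

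First, I would observe that conditions (\ref{relax2}), (\ref{relax3}), and (\ref{relax4}) form a nested chain independently of whether one prepends (\ref{relax1}) or (\ref{relax1'}). Indeed, (\ref{relax2}) requires $t^{j}_{i}(s)=t^{j}_{i}(s_{0})$ for all $i\in N\setminus T$, $j\in M$, which trivially implies the inequality $t^{j}_{i}(s)\leq t^{j}_{i}(s_{0})$ of (\ref{relax3}). Then, using that all processing times equal one, $C^{j}_{i}(s)=t^{j}_{i}(s)+1\leq t^{j}_{i}(s_{0})+1=C^{j}_{i}(s_{0})$ for every $j\in M$, so taking the maximum over $j$ yields $C_{i}(s)\leq C_{i}(s_{0})$, which is (\ref{relax4}). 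Combining with the common condition (\ref{relax1'}), I conclude $\mathcal{AS}^{2^{\prime}}(T)\subseteq \mathcal{AS}^{3^{\prime}}(T)\subseteq \mathcal{AS}^{4^{\prime}}(T)$ for every $\emptyset \neq T\subsetneq N$.

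Next, since by definition $v^{k^{\prime}}(T)=c_{T}(s_{0})-\min_{s\in \mathcal{AS}^{k^{\prime}}(T)} c_{T}(s)$, enlarging the feasible set can only decrease the minimum, hence only increase (weakly) the value. Therefore $v^{2^{\prime}}(T)\leq v^{3^{\prime}}(T)\leq v^{4^{\prime}}(T)$ for every proper subcoalition, which gives the second line of the claim.

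Finally, for $T=N$ the set $N\setminus T$ is empty, so the universal quantifiers in (\ref{relax1'}), (\ref{relax2}), (\ref{relax3}), and (\ref{relax4}) are vacuously satisfied by any feasible schedule. Hence $\mathcal{AS}^{2^{\prime}}(N)=\mathcal{AS}^{3^{\prime}}(N)=\mathcal{AS}^{4^{\prime}}(N)=\mathcal{S}$ and the three values coincide, giving the first line. There is no real obstacle: the proof is a one-paragraph bookkeeping exercise that uses only the ordering of the completion/starting time conditions and the vacuous nature of the restrictions at the grand coalition, both of which are insensitive to replacing (\ref{relax1}) with (\ref{relax1'}).
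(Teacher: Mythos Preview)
Your proposal is correct and takes essentially the same approach as the paper, which in fact omits the proof of Proposition~\ref{prop:games_prime} entirely, implicitly relying on the analogy with Proposition~\ref{pro:relation_games}. Your argument spells out the containments $\mathcal{AS}^{2^{\prime}}(T)\subseteq \mathcal{AS}^{3^{\prime}}(T)\subseteq \mathcal{AS}^{4^{\prime}}(T)$ and the vacuity at $T=N$ in more detail than the paper's one-line proof of the analogous result, but the underlying reasoning is identical.
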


In the next theorem, we observe that the game $(N,v^{4^{\prime}})$ is balanced, and consequently $(N,v^{3^{\prime}})$ and $(N,v^{2^{\prime}})$, too.

\begin{theorem}
\label{teo_prime}
Let $(N,M,s_0)$ be a unit time open shop scheduling problem with initial schedule. Then, $\bar{\mu}(N,M,s_0)\in C(v^{4^{\prime}})$.
\end{theorem}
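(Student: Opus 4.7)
The plan is to adapt the proof of Theorem \ref{teo}, replacing its connected-component decomposition by a simpler \emph{global} position-multiset identity that condition (\ref{relax1'}) makes available. Efficiency is immediate: Proposition \ref{prop:games_prime} gives $v^{4'}(N)=v^{4}(N)$ and Theorem \ref{teo} already established $\bar\mu(N)=v^{4}(N)$. So only the coalitional rationality inequalities $\bar\mu(T)\ge v^{4'}(T)$ for $T\subset N$ need a new argument.

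Fix $T\subset N$ and let $\hat s_T\in\mathcal{AS}^{4'}(T)$ be optimal, with unique compatible scheme $\hat\sigma_T$. Under (\ref{relax1'}) the scheme positions of outsiders are pinned: $\hat\sigma^{j}_T(i)=\sigma^{j}_0(i)$ for every $i\in N\setminus T$ and $j\in M$. Because both $\sigma^{j}_0$ and $\hat\sigma^{j}_T$ are bijections $N\to\{1,\ldots,n\}$, the remaining positions must also coincide, yielding
$$\{\sigma^{j}_0(i):i\in T\}=\{\hat\sigma^{j}_T(i):i\in T\}\quad\text{for every } j\in M.$$
This is strictly stronger than---and directly replaces---the component-wise identity $\{\sigma^{j}_0(i):i\in R\}=\{\hat\sigma^{j}_T(i):i\in R\}$ for $R\in T/\sigma^{j}_0$ that was used in Theorem \ref{teo}.

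With this in hand the chain of (in)equalities in the proof of Theorem \ref{teo} carries over almost verbatim. Starting from
$$\bar\mu(T)=\frac{1}{m}\sum_{j\in M}\left(\sum_{i\in T}C_i(s_0)-\sum_{i\in T}\Big\lceil\frac{\sigma^{j}_0(i)}{m}\Big\rceil m\right),$$
the global identity lets us swap $\sigma^{j}_0(i)$ for $\hat\sigma^{j}_T(i)$ inside the second sum without changing its value. Unit processing times give $C^{j}_i(\hat s_T)\ge\hat\sigma^{j}_T(i)$, hence $\lceil\hat\sigma^{j}_T(i)/m\rceil m\le\lceil C^{j}_i(\hat s_T)/m\rceil m$. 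Interchanging the order of summation and then applying Lemma \ref{lemma} to each $i\in T$ produces $\bar\mu(T)\ge\sum_{i\in T}(C_i(s_0)-C_i(\hat s_T))=v^{4'}(T)$.

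The only step that merits a second look is the applicability of Lemma \ref{lemma}. Although it is stated for $\hat s_T\in\mathcal{AS}^{4}(T)$, its proof never actually invokes condition (\ref{relax1}); it relies only on feasibility (well-defined $C^{j}_i$ with $C_i=\max_j C^{j}_i$) and on block-arithmetic properties of $\lceil\cdot/m\rceil$ specific to unit-time open shop, in particular the bound $|J^{*}|\le C^{j^{*}}_i-\lfloor C^{j^{*}}_i/m\rfloor m$, which follows from the non-overlapping of job $i$'s own operations within a block. Hence the lemma transfers to $\mathcal{AS}^{4'}(T)$ unchanged, and the whole argument becomes a strict simplification of the proof of Theorem \ref{teo}.
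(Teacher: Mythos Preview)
Your proposal is correct and follows essentially the same approach as the paper: both observe that Lemma~\ref{lemma} never invokes the admissibility definition and hence applies to $\mathcal{AS}^{4'}(T)$ unchanged, and both replace the connected-component identity in the proof of Theorem~\ref{teo} by the global multiset identity $\{\sigma^{j}_{0}(i):i\in T\}=\{\hat\sigma^{j}_T(i):i\in T\}$ that condition~(\ref{relax1'}) guarantees. One wording quibble: this global identity is \emph{weaker} than the component-wise one (the latter implies the former, not conversely), so your phrase ``strictly stronger'' is backwards; the point is that the global identity, while weaker, is still sufficient for the single inequality $\sum_{i\in T}\lceil\sigma^{j}_{0}(i)/m\rceil m\le\sum_{i\in T}\lceil C^{j}_{i}(\hat s_T)/m\rceil m$ that the proof actually needs.
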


\begin{proof}
We first observe that Lemma \ref{lemma} still holds if we consider $\mathcal{AS}^{4^{\prime}}(T)$ instead of $\mathcal{AS}^{4}(T)$. This follows because we do not use any argument based on the definition of admissible rearrangements. We only employ the specifications of open shop scheduling problems. In particular, the restriction that two operations of the same job cannot be processed simultaneously on two different machines. 

Moreover, Theorem \ref{teo} also holds. It is enough to follow the lines of the proof together with the observation that for $T\subset N$ and $\hat{s}_{T}\in\mathcal{AS}^{4^{\prime}}(T)$ we have 
\begin{equation}
\label{relax_teo}
\sum\limits_{i\in T}\left\lceil \frac{\sigma^{j}_{0}(i)}{m} \right\rceil m \leq \sum\limits_{i\in T}\left\lceil \frac{C^{j}_{i}(\hat{s}_{T})}{m} \right\rceil m.
\end{equation}
So, the inequality we proved for the maximally connected compenents of $T$ is also satisfied for the full coalition $T$. In fact, if $T=\{i_1,\ldots, i_t\}$ and $\hat{\sigma}_{T}$ is the unique optimal scheme compatible with $\hat{s}_{T}$, by condition (\ref{relax1'}), we have $\{\sigma_{0}^{j}(i_{1}),\ldots, \sigma_{0}^{j}(i_{t})\}=\{\hat{\sigma}_{T}^{j}(i_{1}),\ldots, \hat{\sigma}_{T}^{j}(i_{t})\}$. Moreover, for every $i_{k}\in T$ $C^{j}_{i_{k}}(\hat{s}_{T})\geq \hat{\sigma}_{T}^{j}(i_{k})$ and consequently (\ref{relax_teo}) holds.
\end{proof}

To finish, we provide a counterexample to illustrate that further relaxations on the admissible rearrangements leads to games that violate balancedness. Let $T\subset N$, we now admit any schedule for $T$ except if it hurts players in $N\setminus T$, without imposing any condition on the associated scheme. Let $\overline{\mathcal{AS}}^{k}(T)$ for $k=\{2,3,4\}$ be the set of admissible rearrangements for a coalition $T\subseteq N$ that satisfy only condition $(k)$, but not necessarily (\ref{relax1}) or (\ref{relax1'}). For $k=\{2,3,4\}$, by $(N,\overline{v}^{k})$ we denote the game associated with a unit open shop scheduling problem with initial schedule $(N,M,s_0)$ where the set of admissible rearrangements is $\overline{\mathcal{AS}}^{k}(T)$ for any $T\subset N$. The relation among such games is stated in the next proposition.
\begin{proposition}
\label{prop:games}
Let $(N,M,s_0)$ be a unit open shop scheduling problem with initial schedule. Then, it holds 
\begin{align*}
\overline{v}^{2}(N)=\overline{v}^{3}(N)=\overline{v}^{4}(N) &\\
\overline{v}^{2}(T)\leq \overline{v}^{3}(T)\leq \overline{v}^{4}(T) & \qquad  \forall T\subset N.
\end{align*}
\end{proposition}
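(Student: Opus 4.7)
The plan is to mirror the proofs of Propositions \ref{pro:relation_games} and \ref{prop:games_prime}, since the only difference here is that we drop conditions (\ref{relax1}) and (\ref{relax1'}) on the scheme and keep only the ``non-hurting'' conditions on starting or completion times of outside players. Once one realizes that each of conditions (\ref{relax2}), (\ref{relax3}), (\ref{relax4}) is phrased entirely as a restriction ``for all $i\in N\setminus T$'', the proof reduces to set inclusion plus the trivial observation that minimizing cost over a larger admissible set yields (weakly) larger savings.

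First I would handle the equalities at the grand coalition $T=N$. Since $N\setminus N=\emptyset$, each of the conditions (\ref{relax2}), (\ref{relax3}), (\ref{relax4}) is vacuously satisfied by every $s\in\mathcal{S}$. Hence
\[
\overline{\mathcal{AS}}^{2}(N)=\overline{\mathcal{AS}}^{3}(N)=\overline{\mathcal{AS}}^{4}(N)=\mathcal{S},
\]
and consequently the minima of $c_N(\cdot)$ over these three sets coincide, giving $\overline{v}^{2}(N)=\overline{v}^{3}(N)=\overline{v}^{4}(N)$.

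Next I would establish the set inclusions $\overline{\mathcal{AS}}^{2}(T)\subseteq\overline{\mathcal{AS}}^{3}(T)\subseteq\overline{\mathcal{AS}}^{4}(T)$ for every $T\subset N$. The first inclusion is immediate, because $t^{j}_{i}(s)=t^{j}_{i}(s_{0})$ trivially implies $t^{j}_{i}(s)\leq t^{j}_{i}(s_{0})$. For the second inclusion, I would use that all processing times equal~$1$: if $s\in\overline{\mathcal{AS}}^{3}(T)$, then for every $i\in N\setminus T$ and every $j\in M$,
\[
C^{j}_{i}(s)=t^{j}_{i}(s)+1\leq t^{j}_{i}(s_{0})+1=C^{j}_{i}(s_{0}),
\]
and taking the maximum over $j\in M$ gives $C_{i}(s)\leq C_{i}(s_{0})$, i.e.\ $s\in\overline{\mathcal{AS}}^{4}(T)$.

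Finally, from these inclusions it follows that $\min_{s\in\overline{\mathcal{AS}}^{2}(T)}c_{T}(s)\geq\min_{s\in\overline{\mathcal{AS}}^{3}(T)}c_{T}(s)\geq\min_{s\in\overline{\mathcal{AS}}^{4}(T)}c_{T}(s)$, so subtracting each from the common value $c_{T}(s_{0})$ yields $\overline{v}^{2}(T)\leq\overline{v}^{3}(T)\leq\overline{v}^{4}(T)$, as required. There is no real obstacle here; the statement is essentially a bookkeeping corollary of the definitions, and the only nontrivial ingredient is the use of unit processing times when passing from a pointwise inequality on starting times to a pointwise inequality on completion times of each outside player.
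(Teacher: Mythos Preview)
Your proposal is correct and follows essentially the same approach as the paper: the paper does not give an explicit proof of this proposition, but it is the direct analogue of Proposition~\ref{pro:relation_games}, whose one-line proof is precisely the set-inclusion argument you spell out. One minor remark: your appeal to unit processing times for the inclusion $\overline{\mathcal{AS}}^{3}(T)\subseteq\overline{\mathcal{AS}}^{4}(T)$ is unnecessary, since $p^{j}_{i}$ is a fixed problem datum and $t^{j}_{i}(s)\leq t^{j}_{i}(s_{0})$ already yields $C^{j}_{i}(s)=t^{j}_{i}(s)+p^{j}_{i}\leq t^{j}_{i}(s_{0})+p^{j}_{i}=C^{j}_{i}(s_{0})$ regardless of the value of $p^{j}_{i}$.
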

Next, we show the non-balancedness result.
\begin{proposition}
Let $(N,M,s_0)$ be a unit time open shop scheduling problem. Then, the associated game $(N,\overline{v}^{k})$ may not be balanced for any $k=\{2,3,4\}$.
\end{proposition}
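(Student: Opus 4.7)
The plan is to exhibit a single instance $(N,M,s_0)$ in which
\[
\sum_{i\in N} \overline{v}^k(\{i\}) > \overline{v}^k(N)
\]
simultaneously for $k=2,3,4$, which immediately rules out a core element in all three games. By Proposition~\ref{prop:games} the grand-coalition values coincide and $\overline{v}^2(\{i\})\leq \overline{v}^3(\{i\})\leq \overline{v}^4(\{i\})$, so it suffices to verify the strict inequality for the \emph{most restrictive} game $\overline{v}^2$ and the conclusion automatically propagates to $\overline{v}^3$ and $\overline{v}^4$.

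I would take the three-player, two-machine instance with $N=\{1,2,3\}$, $M=\{1,2\}$, and $s_0$ processing jobs in the order $1,2,3$ on $m_1$ starting at time $0$ and in the same order on $m_2$ but delayed by two units (so $m_2$ sits idle on $[0,2]$ and then processes jobs $1,2,3$ back to back). The initial completion times are $C_1(s_0)=3$, $C_2(s_0)=4$, $C_3(s_0)=5$, totalling $12$.

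The first step is to apply Algorithm~\ref{Algo:AA} to the grand coalition: with $n=3$, $m=2$ it produces $C_1=C_2=2$, $C_3=4$, so $\overline{v}^k(N)=12-8=4$. The second step is to compute $\overline{v}^2(\{i\})$ for each $i\in N$ by freezing the four starting times of the two other players at their $s_0$ values and searching over the feasible placements of the two operations of player $i$. A short case check (each reduces to fitting two unit blocks into the open slots on $m_1$ and $m_2$) gives $\overline{v}^2(\{1\})=1$, $\overline{v}^2(\{2\})=2$, $\overline{v}^2(\{3\})=2$, summing to $5>4$. Hence every payoff vector satisfying $x_i\geq \overline{v}^k(\{i\})$ would need to allocate the singletons at least $5$ in total while efficiency demands distributing only $4$, so $C(\overline{v}^k)=\emptyset$ for each $k\in\{2,3,4\}$.

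The main obstacle is the design of $s_0$, not the verification: the grand coalition can freely reschedule everything, so to force $\sum_i \overline{v}^2(\{i\})$ above $\overline{v}^2(N)$ one must engineer idle intervals that each individual player can exploit without the cooperation of the others. Placing a two-unit idle block at the start of $m_2$ accomplishes this, because in turn each single player can slide her $m_2$-operation forward into that window while the two frozen players' starting times remain untouched; this is exactly the leverage that the restrictions (\ref{relax1}) and (\ref{relax1'}) were preventing, and its absence here is what breaks balancedness.
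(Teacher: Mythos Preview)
Your proof is correct and follows essentially the same approach as the paper: exhibit a small three-player, two-machine instance in which the singleton values under $\overline{v}^2$ already sum to more than $\overline{v}^2(N)$, and then invoke Proposition~\ref{prop:games} to carry the conclusion to $k=3,4$. The only difference is the particular initial schedule chosen---the paper uses a semi-active $s_0$ with scheme $(1,3,2)$ on $m_2$, whereas your $s_0$ inserts an artificial two-unit idle block at the start of $m_2$ (so it is not semi-active); since the model only requires $s_0\in\mathcal{S}$ this is perfectly valid, and your verification of the singleton values and of $\overline{v}^k(N)=4$ is correct.
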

\begin{proof}
In view of Proposition \ref{prop:games}, it is enough to show that there is $(N,M,s_0)$ such that $(N,\overline{v}^{2})$ is not balanced. Let $N=\{1,2,3\}$, $M=\{1,2\}$, and the initial schedule $s_0$ as follows:
\begin{center}
\begin{tabular}{ c|c|c|c|c|c| } 
\cline{2-6} 
$m_{1}$ & 1 & 2 & 3 & & \\  \cline{2-6} 
$m_{2}$ & & 1 & & 3 & 2 \\  \cline{2-6}
\end{tabular}.
\end{center}
Let $T=\{2\}$. It is easy to check that $\hat{s}_{\{2\}}\in\overline{\mathcal{AS}}^{2}(\{2\})$ is
\begin{center}
\begin{tabular}{ c|c|c|c|c|c| } 
\cline{2-5} 
$m_{1}$ & 1 & 2 & 3 & \\  \cline{2-5} 
$m_{2}$ & 2 & 1 & & 3 \\  \cline{2-5}
\end{tabular}.
\end{center}
Hence, $\overline{v}^{2}(\{2\})=3$. Let $T=\{3\}$, it is easy to check that $\hat{s}_{\{3\}}\in\overline{\mathcal{AS}}^{2}(\{3\})$ is
\begin{center}
\begin{tabular}{ c|c|c|c|c|c|c| } 
\cline{2-6} 
$m_{1}$ & 1 & 2 & 3 & & \\  \cline{2-6} 
$m_{2}$ & 3 & 1 & & & 2 \\  \cline{2-6}
\end{tabular},
\end{center}
and hence, $\overline{v}^{2}(\{3\})=1$. Finally, $\hat{s}_{N}$ is
\begin{center}
\begin{tabular}{ c|c|c|c|c|c| } 
\cline{2-5} 
$m_{1}$ & 1 & 2 & 3 & \\  \cline{2-5} 
$m_{2}$ & 2 & 1 & & 3 \\  \cline{2-5}
\end{tabular}.
\end{center}
Hence, $\overline{v}^{2}(N)=3$, and there does not exist an allocation $x\in\mathbb{R}^{3}$ that satisfies $x_{1}+x_{2}+x_{3}=3$, $x_{2}\ge 3$, $x_{3}\geq 1$, and $x_{1}\geq \overline{v}^{2}(\{1\})$ since, obviously $\overline{v}^{2}(\{1\})\geq 0$. Therefore, $C(\overline{v}^{2})=\emptyset$.
\end{proof}
\bibliographystyle{te}
\bibliography{openshop_july_30}

\end{document}